\newcommand{\ie}{\emph{i.e.}}
\newcommand{\eg}{\emph{e.g.}}
\newcommand{\cf}{\emph{cf}}
\newcommand{\Real}{\mathbb{R}}
\newcommand{\Sphere}{S}
\newcommand{\Smooth}{C}
\newcommand{\s}{L}
\newcommand{\id}{I}
\newcommand{\sobi}{\mathop{W_0^{1,2}}\nolimits}
\newcommand{\Dom}{\mathop{\mathrm{Dom}}\nolimits}
\newcommand{\dist}{\mathrm{dist}}
\newcommand{\der}{\mathrm{d}}
\newcommand{\eps}{\varepsilon}
\newtheorem{Proposition}{Proposition}
\newtheorem{Theorem}{Theorem}
\theoremstyle{remark}
\newtheorem{Remark}{Remark}
\definecolor{DarkGreen}{rgb}{0,0.5,0.1} % David
\newcommand\soutD{\bgroup\markoverwith
{\textcolor{DarkGreen}{\rule[.5ex]{2pt}{1pt}}}\ULon}
\newcommand\soutP{\bgroup\markoverwith
{\textcolor{blue}{\rule[.5ex]{2pt}{1pt}}}\ULon}
\newcommand{\Hm}[1]{\leavevmode{\marginpar{\tiny%
$\hbox to 0mm{\hspace*{-0.5mm}$\leftarrow$\hss}%
\vcenter{\vrule depth 0.1mm height 0.1mm width \the\marginparwidth}%
\hbox to
0mm{\hss$\rightarrow$\hspace*{-0.5mm}}$\\\relax\raggedright #1}}}
\title[Curved quantum layers]{A lower bound to the spectral threshold 
in curved quantum layers}
\author{Pedro Freitas and David Krej\v{c}i\v{r}\'{\i}k}
\begin{document}

\maketitle

\begin{center}
\emph{Dedicated to Pavel Exner on the occasion of his 70th birthday}
\end{center}

\begin{abstract}
We derive a lower bound
to the spectral threshold of the Dirichlet Laplacian
in tubular neighbourhoods of constant radius about complete surfaces.
This lower bound is given by the lowest eigenvalue
of a one-di\-men\-sional operator depending on the radius
and principal curvatures of the reference surface.
Moreover, we show that it is optimal
if the reference surface is non-negatively curved.
\end{abstract}
%

%\begin{classification}
%Primary 11-XX; Secondary 14-XX.
%\end{classification}

%\begin{keywords}
%Dirichlet Laplacian in tubes; quantum waveguides; quantum layers; ground-state energy.
%\end{keywords}

%---------------------%
\section{Introduction}
%---------------------%
%
In this paper we obtain a lower bound to the lowest energy of a
quantum particle confined to the space delimited by two parallel
surfaces. We assume that these surfaces represent a perfect
hard-wall boundary, in the sense that the particle wavefunction
vanishes there, and concentrate in the case where they are unbounded. 
In agreement with the paper~\cite{DEK2} 
where these structures were introduced,
we shall use the term \emph{quantum layers} for such systems.

This rather simple model is known to be remarkably successful in
describing various aspects of electronic transport in quantum
heterostructures
(we refer to the monograph~\cite{LCM} for the physical background).
One of the main questions arising within this scope
is whether or not there are geometrically induced bound states.
Indeed, some of the most
important theoretical results in the field are a number of
theorems guaranteeing the existence of such solutions
under rather simple and general physical conditions
\cite{DEK2,CEK,LL1,LL2,LL3,Lu-Rowlett_2012}
(see also \cite{Exner-Tater_2010,KL,K10,KRT,
Haag-Lampart-Teufel_2014,Dauge-Ourmieres-Bonafos-Raymond_2015,
Lampart_2015,K-Tusek_2015}
for other mathematical studies of quantum layers).

The main contribution of the present paper
is to provide a lower bound
to the ground-state energy of the bound states.
However, our results are more general in the sense
that this lower bound also applies to situations
where the lowest energy in the spectrum does not correspond
to a bound state, but rather to a scattering state;
this happens, \eg, if the layer is periodically curved.

To obtain this lower bound, we follow an idea
similar to that used by Pavel Exner and the present authors
in~\cite{EFK} to derive a lower bound
to the spectral threshold in \emph{quantum tubes},
\ie~in the case of the configuration space
being a $d$-dimensional tube about an infinite curve,
with $d \geq 2$. More precisely, there it was shown
that the lower bound is given by the lowest Dirichlet eigenvalue
in a torus determined by the geometry of the tube.
This lower bound is optimal in the sense
that it is achieved by a tube
(about a curve of constant curvature).
However, the geometry of quantum layers is more complicated
and we shall see that the optimality
is one of the main features in which the present situation
differs from that of quantum tubes.

In view of the above physical model, the Hamiltonian of a quantum layer
can be identified with the Dirichlet Laplacian
in a tubular neighbourhood of constant radius
about a complete non-compact surface $\Sigma\subset\Real^3$.
In this paper, we proceed in a greater generality
by considering compact surfaces, too.
More precisely, we assume only that
\begin{equation}
\label{H}
\begin{array}{c}
\Sigma \mbox{ is a connected complete orientable surface of class}~\Smooth^2
\mbox{ embedded in}~\Real^3 \\
\mbox{ with bounded principal curvatures } k_1 \mbox{ and } k_2.
\end{array}
\end{equation}
Then, given a positive number~$a$ satisfying
\begin{equation}\label{Ass.Basic1}
  a \, \max\{\|k_1\|_\infty,\|k_2\|_\infty\} < 1
  \,,
\end{equation}
we introduce the tubular neighbourhood
\begin{equation}\label{tube}
  \Omega := \big\{ \mathbf{x}\in\Real^3 \ |\ \dist(\mathbf{x},\Sigma)<a \big\}
\end{equation}
and denote by~$-\Delta_D^\Omega$ the Dirichlet Laplacian in
$\s^2(\Omega)$. In addition to~(\ref{Ass.Basic1}), 
we also assume that~$\Omega$ ``does not overlap itself''
(\cf~(\ref{Ass.Basic2}) below).

If~$\Sigma$ is compact, then~$\Omega$ is bounded
and a lower bound to the spectral threshold of the Laplacian
follows by means of the Faber-Krahn inequality;
\ie, $\inf\sigma(-\Delta_D^\Omega)$ is bounded from below
by the lowest Dirichlet eigenvalue of the ball of volume~$|\Omega|$
in this case.
However, we are mainly interested in the unbounded case,
where similar arguments based on the Faber-Krahn inequality may, at best,
just provide a trivial bound
and the location of $\inf\sigma(-\Delta_D^\Omega)$
becomes difficult, since we are actually
dealing with a class of \emph{quasi-cylindrical} domains
(\cf~\cite[\S49]{Glazman} or \cite[Sec.~X.6.1]{Edmunds-Evans}).
In this note we derive the following
universal lower bound:
\begin{Theorem}\label{Theorem}
Let~$\Omega$ be as above.
One has
\begin{equation}\label{bound}
  \inf\sigma(-\Delta_D^\Omega)
  \ \geq \
  \min\left\{\lambda_1(k_1^+,k_2^-),\lambda_1(k_1^-,k_2^+)\right\}
  \,,
\end{equation}
where
$
  k_i^\pm := \pm\sup(\pm k_i)
$,
$i\in\{1,2\}$,
and
\begin{equation}\label{lambda}
  \lambda_1(\kappa_1,\kappa_2)
  \ := \
  \inf_{\psi\in\sobi((-a,a))\setminus\{0\}} \ 
  \frac{\displaystyle \int_{-a}^a|\psi'(u)|^2
  \,(1-\kappa_1  u)\,(1-\kappa_2 u)\,\der u}
  {\displaystyle \int_{-a}^a|\psi(u)|^2
  \,(1-\kappa_1  u)\,(1-\kappa_2 u)\,\der u}
\end{equation}
for constants $\kappa_1, \kappa_2 \in [-1/a,1/a]$.
\end{Theorem}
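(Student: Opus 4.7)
The plan is to pull the problem back to tubular Fermi coordinates about $\Sigma$, reduce the multi-dimensional quadratic form to a family of one-dimensional weighted eigenvalue problems indexed by $x\in\Sigma$, and finally to dominate these 1D eigenvalues by the two corner problems in~(\ref{bound}) via a Liouville transformation.

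First I would introduce the map $\mathcal{L}\colon\Sigma\times(-a,a)\to\Real^3$, $\mathcal{L}(x,u):=x+u\,N(x)$, with $N$ a global unit normal to $\Sigma$. Under~(\ref{Ass.Basic1}) and the non-overlap hypothesis, $\mathcal{L}$ is a $\Smooth^1$-diffeomorphism onto $\Omega$; in local principal coordinates on $\Sigma$ the pulled-back metric is block-diagonal with normal coefficient $1$ and tangential coefficients $(1-k_iu)^2 g^\Sigma_{ii}$, and its Jacobian equals $(1-k_1(x)u)(1-k_2(x)u)>0$. Setting $\psi:=\phi\circ\mathcal{L}$ for $\phi\in\Smooth^\infty_c(\Omega)$, the Dirichlet form pulls back to
\begin{equation*}
\int_\Omega|\nabla\phi|^2=\int_\Sigma\!\!\int_{-a}^{a}\!\bigl[|\partial_u\psi|^2+|\nabla_\Sigma^u\psi|^2\bigr](1-k_1u)(1-k_2u)\,du\,dA_\Sigma,
\end{equation*}
where $\nabla_\Sigma^u$ denotes the gradient on $\Sigma$ with respect to the parallel-surface metric at level $u$, and $\|\phi\|^2$ admits the analogous expression without the bracketed gradient. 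Discarding the non-negative tangential term and using, for a.e.\ $x\in\Sigma$, that $\psi(x,\cdot)\in\sobi((-a,a))$ (by Fubini together with the density of $\Smooth^\infty_c(\Omega)$ in the form domain), the definition~(\ref{lambda}) yields the pointwise 1D estimate $\int_{-a}^{a}|\partial_u\psi|^2(1-k_1u)(1-k_2u)\,du\ge\lambda_1(k_1(x),k_2(x))\int_{-a}^{a}|\psi|^2(1-k_1u)(1-k_2u)\,du$, and integrating over $\Sigma$ then gives $\inf\sigma(-\Delta_D^\Omega)\ge\inf_{x\in\Sigma}\lambda_1(k_1(x),k_2(x))$.

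The main step, and the one I expect to be the principal obstacle, is to bound this pointwise infimum by the two corner quantities: for every $(\kappa_1,\kappa_2)\in[k_1^-,k_1^+]\times[k_2^-,k_2^+]$ one must show $\lambda_1(\kappa_1,\kappa_2)\ge\min\{\lambda_1(k_1^+,k_2^-),\lambda_1(k_1^-,k_2^+)\}$. My approach is the Liouville substitution $\psi=((1-\kappa_1 u)(1-\kappa_2 u))^{-1/2}\phi$ inside~(\ref{lambda}), which transforms the weighted problem into the standard Dirichlet Schr\"odinger problem $-\phi''+V_{(\kappa_1,\kappa_2)}\phi=\lambda\phi$ on $(-a,a)$ with the remarkably clean potential
\begin{equation*}
V_{(\kappa_1,\kappa_2)}(u)=-\frac{1}{4}\left(\frac{\kappa_1}{1-\kappa_1 u}-\frac{\kappa_2}{1-\kappa_2 u}\right)^{\!2}.
\end{equation*}
Since for every fixed $u\in(-a,a)$ the map $\kappa\mapsto f_u(\kappa):=\kappa/(1-\kappa u)$ is strictly increasing on the admissible interval $[-1/a,1/a]$, the constraints $\kappa_1\le k_1^+$ and $\kappa_2\ge k_2^-$ give $f_u(\kappa_1)-f_u(\kappa_2)\le f_u(k_1^+)-f_u(k_2^-)$; if moreover $\kappa_1\ge\kappa_2$, both sides are automatically non-negative, so $|f_u(\kappa_1)-f_u(\kappa_2)|\le|f_u(k_1^+)-f_u(k_2^-)|$ and hence $V_{(\kappa_1,\kappa_2)}(u)\ge V_{(k_1^+,k_2^-)}(u)$ pointwise in $u$. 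Symmetrically, if $\kappa_1\le\kappa_2$ then $V_{(\kappa_1,\kappa_2)}(u)\ge V_{(k_1^-,k_2^+)}(u)$ pointwise. As the sign of $\kappa_1-\kappa_2$ is $u$-independent, at least one of these two pointwise potential dominations holds globally, and the variational characterisation of the first Dirichlet eigenvalue of a Schr\"odinger operator then yields $\lambda_1(\kappa_1,\kappa_2)\ge\min\{\lambda_1(k_1^+,k_2^-),\lambda_1(k_1^-,k_2^+)\}$. Combined with the previous paragraph, this completes the proof.
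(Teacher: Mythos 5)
Your proof is correct and follows essentially the same route as the paper: pass to Fermi coordinates, drop the nonnegative tangential Dirichlet energy, reduce to the fibred one-dimensional problem, then use the Liouville substitution $\psi=w^{-1/2}\phi$ with $w=(1-\kappa_1u)(1-\kappa_2u)$ to obtain a Schr\"odinger potential whose pointwise comparison gives~(\ref{remains}). The only (welcome) cosmetic difference is that you keep the potential in the form $-\tfrac14\bigl(\tfrac{\kappa_1}{1-\kappa_1u}-\tfrac{\kappa_2}{1-\kappa_2u}\bigr)^2$, where $\kappa\mapsto\kappa/(1-\kappa u)$ is increasing for \emph{every} $u$, so the monotonicity argument is uniform; the paper instead uses the equivalent representation $-\tfrac{1}{4u^2}\bigl(\tfrac{1}{1-\kappa_1u}-\tfrac{1}{1-\kappa_2u}\bigr)^2$, which forces it to treat $u=0$ and the two signs of $u$ separately.
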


In view of Theorem~\ref{Theorem},
the spectral threshold of the Dirichlet Laplacian
in the three-dimensional tubular manifold~$\Omega$
can be estimated from below by means of
the one-dimensional spectral problem
associated with~(\ref{lambda}).
It is easy to verify that $\lambda_1(k_1,k_2)$
with constant~$k_1$ and~$k_2$ gives the spectral threshold
of the Dirichlet Laplacian in the layer
about the plane if $k_1=k_2=0$,
a sphere if $k_1=k_2>0$
or a cylinder if $k_1>0$ and $k_2=0$.
That is, Theorem~\ref{Theorem} is optimal
for the class of layers built about surfaces
with non-negative Gauss curvature $k_1 k_2$.
On the other hand,
we are not aware of a geometric
meaning of~(\ref{lambda}) 
if the Gauss curvature $k_1 k_2$ is negative and the surface is complete. 
In fact, since no such surface exists which satisfies hypothesis~\eqref{H}
and whose Gauss curvature is identically equal to a negative constant,
a better lower bound than~(\ref{bound})
is expected to hold for layers about surfaces
with sign-changing or non-positive Gauss curvature.

In any case,
while the right hand side of~(\ref{bound}) diminishes
as the Gauss curvature of~$\Sigma$ becomes more negative,
it is uniformly bounded away from zero
for layers about surfaces whose Gauss curvature is non-negative:
\begin{Proposition}\label{Prop}
Let $\kappa_1,\kappa_2\in(-1/a,1/a)$ be such that $\kappa_1 \kappa_2 \geq 0$. 
Then
$$
  \lambda_1(\kappa_1,\kappa_2)
  \geq j_{0,1}^2/(2a)^2
  \,,
$$
where $j_{0,1} \approx 2.40$ denotes the first zero of the Bessel function~$J_0$.
\end{Proposition}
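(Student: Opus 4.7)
The plan is to reduce the weighted Rayleigh quotient~(\ref{lambda}) to that of a one-dimensional Schr\"odinger operator $-\partial_u^2+V$ with Dirichlet boundary conditions on $(-a,a)$, and then to dominate the resulting non-positive potential $V$ by the critical Hardy potential $-1/(4(a-u)^2)$, whose associated first eigenvalue equals $j_{0,1}^2/(2a)^2$.

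By the symmetry $u\mapsto -u$ (which flips the signs of both $\kappa_i$) and by the symmetry $\kappa_1\leftrightarrow\kappa_2$, it is enough to prove the bound under the assumption $0\le\kappa_2\le\kappa_1<1/a$. Setting $w(u):=(1-\kappa_1 u)(1-\kappa_2 u)>0$ and performing the unitary substitution $\phi:=\sqrt{w}\,\psi$, an integration by parts (the boundary terms vanish because $\phi$ inherits the Dirichlet condition from $\psi$) yields
$$
\int_{-a}^a |\psi'|^2\,w\,\der u = \int_{-a}^a |\phi'|^2\,\der u + \int_{-a}^a V\,|\phi|^2\,\der u,\qquad \int_{-a}^a |\psi|^2\,w\,\der u = \int_{-a}^a |\phi|^2\,\der u,
$$
with $V = w''/(2w)-(w')^2/(4w^2)$. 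Using $w''=2\kappa_1\kappa_2$ together with the elementary identity $\kappa_1(1-\kappa_2 u)-\kappa_2(1-\kappa_1 u)=\kappa_1-\kappa_2$, a direct computation collapses $V$ to
$$
V(u) = -\frac{(\kappa_1-\kappa_2)^2}{4\,(1-\kappa_1 u)^2(1-\kappa_2 u)^2}.
$$

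The key step is then the pointwise comparison $V(u)\ge -1/(4(a-u)^2)$, which is equivalent to the algebraic inequality $(\kappa_1-\kappa_2)(a-u)\le w(u)$ on $[-a,a]$. I would verify it through the identity
$$
w(u)-(\kappa_1-\kappa_2)(a-u) = \frac{(1-\kappa_1 a)(\kappa_1-\kappa_2)}{\kappa_1}+\kappa_1\kappa_2\,(u-1/\kappa_1)^2
$$
(valid for $\kappa_1>0$; if $\kappa_1=0$ then $\kappa_2=0$ and the inequality is trivial), both terms on the right being non-negative under our assumptions. This identity is the main obstacle of the proof: it is precisely what pins down the critical Hardy constant $1/4$, and hence the threshold $j_{0,1}^2/(2a)^2$.

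After the change of variables $r:=a-u$, the variational principle together with the previous estimate reduces the claim to the improved Hardy-type inequality
$$
\int_0^{2a}\bigl[|\chi'(r)|^2-|\chi(r)|^2/(4r^2)\bigr]\,\der r \,\ge\, \frac{j_{0,1}^2}{(2a)^2}\int_0^{2a}|\chi(r)|^2\,\der r
$$
for every $\chi\in\sobi((0,2a))$. This is the radial version of the Dirichlet eigenvalue problem on the disk of radius $2a$; it follows from the ground-state substitution $\chi=\chi_0\eta$, where $\chi_0(r):=\sqrt{r}\,J_0(j_{0,1}r/(2a))$ is the positive solution on $(0,2a)$ of $-\chi_0''-\chi_0/(4r^2)=(j_{0,1}^2/(2a)^2)\,\chi_0$ satisfying $\chi_0(0)=\chi_0(2a)=0$. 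A single integration by parts shows that the difference of the two sides equals $\int_0^{2a}\chi_0^2\,|\eta'(r)|^2\,\der r\ge 0$, completing the proof.
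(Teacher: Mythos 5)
Your proposal is correct, and while it passes through the same reduction to the one‑dimensional Schr\"odinger form as the paper (\eqref{lambda.potential} with the potential \eqref{potential.pre}), it differs from the paper's own argument in how it then reaches the numerical threshold. The paper first observes the pointwise bound $V(u;\kappa_1,\kappa_2)\ge \min\{V(u;\kappa_1,0),V(u;0,\kappa_2)\}$ to reduce to the ``strip over a circle'' case $\lambda_1(\kappa,0)$, then invokes the monotonicity in~$\kappa$ established in \cite[Prop.~4.2]{EFK} together with the Faber--Krahn inequality to get down to the extremal value $\lambda_1(1/a,0)$, identified geometrically as the first Dirichlet eigenvalue of the disk of radius~$2a$. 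You collapse these two monotonicity steps into a single pointwise comparison $V(u;\kappa_1,\kappa_2)\ge -1/\bigl(4(a-u)^2\bigr)=V(u;1/a,0)$, proved via the explicit identity
\[
w(u)-(\kappa_1-\kappa_2)(a-u)=\frac{(1-\kappa_1 a)(\kappa_1-\kappa_2)}{\kappa_1}+\kappa_1\kappa_2\Bigl(u-\frac{1}{\kappa_1}\Bigr)^2,
\]
which I have checked is indeed valid and manifestly non‑negative under $0\le\kappa_2\le\kappa_1<1/a$. You then verify the endpoint eigenvalue $j_{0,1}^2/(2a)^2$ directly through the ground‑state substitution $\chi=\chi_0\eta$ with $\chi_0(r)=\sqrt{r}\,J_0\bigl(j_{0,1}r/(2a)\bigr)$, rather than quoting the disk eigenvalue and Faber--Krahn. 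The trade‑off is clear: the paper's proof is shorter but leans on external results from \cite{EFK}, whereas yours is fully self‑contained and exposes the sharp Hardy potential explicitly, at the price of having to discover the algebraic identity. Both routes are sound; one small point worth flagging in your write‑up is that $\chi_0$ itself is not in $W_0^{1,2}((0,2a))$ (its derivative blows up like $r^{-1/2}$ at the origin), so the ground‑state transformation should be performed for $\chi\in C_0^\infty((0,2a))$ and extended by density, checking as you implicitly do that the boundary term $\chi_0'\chi_0\eta^2=\chi_0'\chi^2/\chi_0$ vanishes at both endpoints.
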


The bound of Proposition~\ref{Prop} is reminiscent 
of the uniform lower bound obtained in~\cite{EFK} for strips, 
\ie~a two-dimensional analogy of quantum layers,
by applying the Faber-Krahn inequality to
a sequence of Dirichlet annuli converging to a Dirichlet disk.

If $\kappa_1 \kappa_2 < 0$, it actually turns out that it is impossible to obtain 
a lower bound to $\lambda_1(\kappa_1,\kappa_2)$ 
for all $\kappa_1,\kappa_2\in(-1/a,1/a)$
that would not depend on~$\kappa_1$ and~$\kappa_2$,
as the following result shows:
\begin{Proposition}\label{Prop2}
We have
$$
  \lambda_1(-1/a,1/a) = 0
  \,.
$$
\end{Proposition}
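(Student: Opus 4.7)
The plan is to exhibit an admissible sequence $(\psi_\eps)_{\eps>0} \subset \sobi((-a,a))$ whose Rayleigh quotient in~(\ref{lambda}) tends to zero; since $\lambda_1(-1/a,1/a) \geq 0$ is immediate from the definition, this will give the claim. With $\kappa_1 = -1/a$ and $\kappa_2 = 1/a$ the weight in~(\ref{lambda}) reduces to
\[
  w(u) := (1 + u/a)(1 - u/a) = 1 - u^2/a^2 ,
\]
which vanishes linearly at both endpoints $u = \pm a$. This double degeneracy is what makes the infimum reach zero, and it will be exploited via a suitable change of variables.

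The central idea is to substitute $t = h(u)$, where
$
  h(u) := (a/2)\log((a+u)/(a-u))
$
is the antiderivative of $1/w$ that diffeomorphically maps $(-a,a)$ onto $\Real$; its inverse is $u = a\tanh(t/a)$. Since $\der u = w\,\der t$ and $w(u(t)) = 1-\tanh^2(t/a) = \mathrm{sech}^2(t/a)$, the chain rule applied to $\psi := \phi\circ h$ transforms the Rayleigh quotient into
\[
  \frac{\int_{-a}^{a}|\psi'(u)|^2\,w(u)\,\der u}{\int_{-a}^{a}|\psi(u)|^2\,w(u)\,\der u}
  \ = \
  \frac{\int_{\Real}|\phi'(t)|^2\,\der t}{\int_{\Real}|\phi(t)|^2\,\mathrm{sech}^4(t/a)\,\der t} .
\]
The numerator weight is now constant and the denominator weight $\mathrm{sech}^4(t/a)$ is integrable with total mass $4a/3$, so any broad, slowly varying $\phi$ will drive the ratio to zero.

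For the explicit sequence I would pick the tent function $\phi_\eps(t) := \max\{0,\,1 - \eps|t|\}$, which is Lipschitz with compact support $[-1/\eps,1/\eps]$. The composition $\psi_\eps := \phi_\eps \circ h$ is then compactly supported inside $(-a,a)$, and the chain rule (with $h'$ bounded on compact subsets of $(-a,a)$) gives $\psi_\eps \in \sobi((-a,a))$. One has $\int_{\Real}|\phi_\eps'(t)|^2\,\der t = 2\eps$, whereas dominated convergence gives $\int_{\Real}|\phi_\eps(t)|^2\,\mathrm{sech}^4(t/a)\,\der t \to 4a/3$ as $\eps \to 0^+$. Hence $\lambda_1(-1/a,1/a) \leq 3\eps/(2a) + o(1) \to 0$. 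The only item that needs a brief verification is the membership $\psi_\eps \in \sobi((-a,a))$; beyond that the argument is purely computational, and no serious obstacle arises once the substitution $t=h(u)$ has been identified.
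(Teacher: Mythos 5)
Your argument is correct. The change of variables $t = h(u)$ with $h' = 1/w$ is indeed a well-defined diffeomorphism of $(-a,a)$ onto $\Real$, the transformed Rayleigh quotient $\int_\Real|\phi'|^2\,\der t \big/ \int_\Real|\phi|^2\,\mathrm{sech}^4(t/a)\,\der t$ is correctly computed, and the tent functions do the job: numerator $2\eps$, denominator $\to 4a/3$, so the quotient $\to 0$. The membership $\psi_\eps\in\sobi((-a,a))$ is clear because $\psi_\eps$ is Lipschitz with support $[-a\tanh(1/(a\eps)), a\tanh(1/(a\eps))]$, compactly contained in $(-a,a)$.

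This is a genuinely different route from the paper's. The paper stays in the original variable $u$, writes down an explicit test function that equals $1$ on $(-a+\eps, a-\eps)$, drops to $0$ via a logarithmic profile on $[a-\eps, a-\eps^2]$, and then computes the numerator directly, first using the symmetry of the weight to reduce to $(0,a)$ and bounding $1\le 1+u/a\le 2$. Your substitution is essentially a Liouville-type transformation that pushes the boundary degeneracy of the weight to infinity, so the quotient becomes the ratio of the flat Dirichlet energy on the line to a weighted $L^2$ norm with an integrable weight; the vanishing of the infimum is then a transparent "constants almost work" phenomenon. This is conceptually cleaner and explains \emph{why} the infimum is zero (the degenerate weight $1-u^2/a^2$ destroys the Poincar\'e inequality by conformally opening up the interval into the full line), whereas the paper's direct construction is slightly more elementary and self-contained. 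Both are valid; yours has the extra merit of making the mechanism visible, and in fact your composed test function $\psi_\eps = \phi_\eps\circ h$ is, up to details, the same kind of logarithmic cutoff as the paper's.
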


The rest of this paper consists of one section
where we provide the proofs of Theorem~\ref{Theorem} and
Propositions~\ref{Prop} and~\ref{Prop2}.

%-------------------%
\section{The proofs}
%-------------------%
%
The central step in the proof of Theorem~\ref{Theorem}
is based on an idea adopted from~\cite{EFK}.
Roughly speaking, expressing the Laplacian~$-\Delta_D^\Omega$
in the natural coordinates parameterising the layer~(\ref{tube})
by means of ``longitudinal'' coordinates
of the reference surface~$\Sigma$
and a ``transverse'' coordinate
of the normal bundle of~$\Sigma$,
we neglect the contribution of the former
and the latter leads to a ``variable'' lower bound
of the type~(\ref{lambda}).
The constant lower bound
given by the right hand side of~(\ref{bound})
and the uniform lower bound of Proposition~\ref{Prop}
then follow from an analysis of the one-dimensional
spectral problem associated with~(\ref{lambda}).

We need to start with a detailed geometry of curved layers
adopted from~\cite{CEK}.
Let~$g$ be the Riemannian metric of~$\Sigma$
induced by the embedding.
The orientation of~$\Sigma$ is specified by 
a globally defined unit normal vector field $n:\Sigma\to\Sphere^2$.
For any point~$x\in\Sigma$, we introduce the Weingarten map
\begin{equation}\label{Weingarten}
  L_x : \ T_x\Sigma \to T_x\Sigma : \,
  \big\{ \xi \mapsto -\der n_x(\xi) \big\}
  \,.
\end{equation}
The principal curvatures~$k_1$ and~$k_2$ at~$x$
are defined as eigenvalues of~$L_x$ with respect to~$g(x)$.
Although these curvatures are \emph{a priori}
defined only locally on~$\Sigma$,
the Gauss curvature $K := k_1 k_2$ 
and the mean curvature $M := \frac{1}{2}(k_1+k_2)$
are globally defined continuous functions on~$\Sigma$. 

Let us introduce the mapping
\begin{equation}\label{layer}
  \mathcal{L} : \ \Sigma\times(-a,a) \to \Real^3 : \,
  \big\{(x,u) \mapsto x+n(x)\,u \big\} .
\end{equation}
Assuming~(\ref{Ass.Basic1}) and that
\begin{equation}\label{Ass.Basic2}
  \mathcal{L}
  \quad\mbox{is injective},
\end{equation}
this mapping induces a diffeomorphism
and the image $\mathcal{L}\big(\Sigma\times(-a,a)\big)$
coincides with~$\Omega$ defined by~(\ref{tube}).
In other words, $\Omega$~is a submanifold of~$\Real^3$
squeezed between two parallel surfaces at the distance~$a$ from~$\Sigma$.

Using~\eqref{layer}, we can identify~$\Omega$
with the Riemannian manifold $\Sigma\times(-a,a)$
endowed with the metric~$G$
induced by~$\mathcal{L}$.
One has
\begin{equation}\label{metric}
  G(x,u)=g(x)\circ\left(I_x-L_x\,u\right)^2 + \der u^2 \,,
\end{equation}
where~$\id_x$ denotes the identity map on~$T_x\Sigma$.
By the definition of principal curvatures,
it is easy to see that the measure
on $\Omega \simeq \big(\Real\times(-a,a),G\big)$
at a point $(x,u)$ acquires the form
\begin{equation}\label{volume}
  \der\Omega = \big(1-k_1(x)\,u\big)\big(1-k_2(x)\,u\big)
  \, \der\Sigma \, \der u \,,
\end{equation}
where~$\der\Sigma\,\der u$ stands for the product measure
on~$\Sigma\times(-a,a)$ at $(x,u)$.
Here $\der\Sigma = |g(x)|^{1/2} \der x^1 \der x^2$ 
in a local coordinate system of~$\Sigma$ at~$x$,
with the usual notation $|g| := \det(g)$.

Let~$G^{ij}$ be the coefficients of the inverse
of~$G$ in local coordinates~$(x,u)$ for~$\Sigma\times(-a,a)$.
Using the above identification,
$-\Delta_D^\Omega$ is unitarily equivalent to
the self-adjoint operator~$H$ associated with the quadratic
form~$h$ defined in the Hilbert space
$\mathcal{H} := \s^2\big(\Sigma\times(-a,a),\der\Omega\big)$ by
\begin{align}\label{form}
  h[\Psi] &:= \int_{\Sigma\times(-a,a)}
  \big(\overline{\partial_i\Psi(x,u)}\big)
  \, G^{ij}(x,u) \,
  \big(\partial_j\Psi(x,u)\big)
  \, \der\Omega
  \,,
  \\
  \Psi \in \Dom h &:=
  \sobi\big(\Sigma\times(-a,a),\der\Omega\big)
  \,. \nonumber
\end{align}
Here the Sobolev space $\sobi\big(\Sigma\times(-a,a),\der\Omega\big)$
is defined as the completion of functions from $C_0^\infty(\Sigma\times(-a,a))$
with respect to the norm 
$
  (h[\cdot] + \|\cdot\|_{\mathcal{H}}^2)^{1/2}
$.
Consequently, to prove Theorem~\ref{Theorem},
it is equivalent to establish
the lower bound~(\ref{bound}) for the operator~$H$.

\begin{proof}[Proof of Theorem~\ref{Theorem}]
Let~$\Psi$ be any function defined in
$\Smooth_0^\infty\big(\Sigma\times(-a,a)\big)$,
a dense subspace of $\Dom h$.
Since~$(G^{\mu\nu})_{\mu,\nu=1,2}$ is positive definite,
one has
\begin{align*}
  h[\Psi]
  &\geq \int_\Sigma \der\Sigma
  \int_{-a}^a \der u \ |\partial_u\Psi(x,u)|^2 \,
  \big(1-k_1(x)\,u\big)\big(1-k_2(x)\,u\big)
  \\
  &\geq \int_\Sigma \der\Sigma \ \lambda_1\big(k_1(x),k_2(x)\big)
  \int_{-a}^a \der u \ |\Psi(x,u)|^2 \,
  \big(1-k_1(x)\,u\big)\big(1-k_2(x)\,u\big)
  \,,
\end{align*}
where $\lambda_1(\kappa_1,\kappa_2)$ is defined by~(\ref{lambda}).
It remains to show that
\begin{equation}\label{remains}
  \lambda_1\big(k_1(x),k_2(x)\big)
  %\lambda_1(k_1,k_2)
  \geq
  \min\left\{\lambda_1(k_1^+,k_2^-),\lambda_1(k_1^-,k_2^+)\right\}
\end{equation}
for all $x \in \Sigma$.
Given constants $\kappa_1,\kappa_2 \in (-1/a,1/a)$,
the change of test function
$
  \phi := \sqrt{(1-\kappa_1 u)(1-\kappa_2 u)} \, \psi
$
in~(\ref{lambda}) and an integration by parts yields
\begin{equation}\label{lambda.potential}
  \lambda_1(\kappa_1,\kappa_2)
  \ = \
  \inf_{\phi\in\sobi((-a,a))\setminus\{0\}} \
  \frac{\displaystyle \int_{-a}^a
  \left(|\phi'(u)|^2+V(u;\kappa_1,\kappa_2)\,|\phi(u)|^2\right)
  \der u}
  {\displaystyle \int_{-a}^a|\phi(u)|^2\,\der u}
  \,,
\end{equation}
where
\begin{equation}\label{potential.pre}
  V(u;\kappa_1,\kappa_2)
  := -\frac{1}{4}\,
  \frac{(\kappa_1-\kappa_2)^2}{(1-\kappa_1 u)^2(1-\kappa_2 u)^2}
  \,.
\end{equation}
The constant lower bound~(\ref{remains}) then follows
by observing that
$$
  V\big(u;k_1(x),k_2(x)\big)
  \geq \min\big\{
  V(u;k_1^+,k_2^-), V(u;k_1^-,k_2^+)
  \big\}
$$
for any fixed $u\in(-a,a)$ and all $x \in \Sigma$.
The last inequality can be established
for non-zero~$u$'s by writing
\begin{equation}\label{potential}
  V(u;\kappa_1 ,\kappa_2 )
  = - \frac{1}{4 u^2}
  \left[\frac{1}{(1-\kappa_1  u)}-\frac{1}{(1-\kappa_2  u)}\right]^2
\end{equation}
and follows more easily for $u=0$.
\end{proof}
\begin{Remark}
Following~\cite[Rem.~1]{CEK},
since the hypothesis~(\ref{Ass.Basic1}) is still enough
to ensure that $\big(\Sigma\times(-a,a),G\big)$ is immersed in~$\Real^3$,
we do not need to assume~(\ref{Ass.Basic2})
in order to get~(\ref{bound}) for the operator~$H$.
\end{Remark}

Let us now derive the uniform lower bound
of Proposition~\ref{Prop}.
\begin{proof}[Proof of Proposition~\ref{Prop}]
In view of~\eqref{lambda}, without loss of generality
we may assume that~$\kappa_1$ and~$\kappa_2$ are non-negative.
By~\eqref{lambda.potential} with~\eqref{potential}, we have
$$
  \lambda_1(\kappa_1,\kappa_2) 
  \geq \min\{\lambda_1(\kappa_1,0),\lambda_1(0,\kappa_2)\}
  \,.
$$
However, $\lambda_1(\kappa,0)=\lambda_1(0,\kappa)$ with $\kappa \in [0,1/a)$
is the spectral threshold of the Dirichlet Laplacian in the strip
of cross-section $(-a,a)$ built either over a circle of curvature~$\kappa$
if $\kappa \not= 0$ or over a straight line if $\kappa=0$.
With help of monotonicity properties established in \cite[Prop.~4.2]{EFK}
(or using again~\eqref{lambda.potential} with~\eqref{potential}),
the Faber-Krahn inequality yields (\cf~\cite[Prop.~4.5]{EFK}) 
$$
  \lambda_1(0,\kappa) 
  \geq \lambda_1(0,1/a)
  = j_{0,1}^2/(2a)^2
$$
for all $\kappa \in [0,1/a)$.
Notice that $\lambda_1(0,1/a)$ is the lowest eigenvalue
of the Dirichlet Laplacian in the disk of radius~$2a$.
\end{proof}

Finally, we establish Proposition~\ref{Prop2}.
\begin{proof}[Proof of Proposition~\ref{Prop2}]
For any positive number $\eps < \min\{1,a\}$, let us set
$$
  \psi_\eps(u) :=
  \begin{cases}
    1 
    & \mbox{if} \quad |u| \leq a-\eps \,,
    \\
    \displaystyle
    -\frac{\log[(a-u)/\eps^2]}{\log(\eps)} 
    & \mbox{if} \quad a-\eps \leq |u| \leq a-\eps^2 \,
    \\
    0 & \mbox{if} \quad a-\eps^2 \leq |u|.
  \end{cases}
$$

Then $\psi_\eps \in W_0^{1,2}((-a,a))$ and using~$\psi_\eps$
as a test function in the right hand side of~\eqref{lambda} 
with $\kappa_1 := -1/a$ and $\kappa_2 := 1/a$, 
we obtain
$$
  \lambda_1(-1/a,1/a)
  \leq 2 \, \frac{\displaystyle\int_0^a |\psi_\eps'(u)|^2 \, (1-u/a) \, \der u}
  {\displaystyle\int_0^a |\psi_\eps(u)|^2 \, (1-u/a) \, \der u}
  \, ,
$$
where we used the bounds $1\leq 1+u/a\leq 2$.
While the denominator converges to $\int_0^a (1-u/a) \, \der u = a/2$,
an explicit computation shows that the numerator 
tends to zero as $\eps \to 0$.
\end{proof}
\begin{Remark}
Note that~\eqref{lambda.potential} yields a Hardy-Poincar\'e-type inequality
\begin{equation}\label{Hardy}
  \int_{-a}^a |\phi'(u)|^2 \, \der u
  \geq \lambda_1(\kappa_1,\kappa_2) \int_{-a}^a |\phi(u)|^2 \, \der u
  + \int_{-a}^a |V(u;\kappa_1,\kappa_2)| \, |\phi(u)|^2 \, \der u
\end{equation}
for all $\phi \in W_0^{1,2}((-a,a))$ and $\kappa_1,\kappa_2 \in [-1/a,1/a]$,
where the Hardy weight $V(\cdot;\kappa_1,\kappa_2)$ is given by~\eqref{potential.pre}
and the Poincar\'e constant $\lambda_1(\kappa_1,\kappa_2)$ interpolates 
between~$0$ and $\pi^2/(2a)^2$.
An equivalent version of this inequality in weighted spaces
follows from~\eqref{lambda}.
If $\kappa_1=\kappa_2 \geq 0$, then $V(\cdot;\kappa_1,\kappa_2)$
vanishes identically and $\lambda_1(\kappa_1,\kappa_2)$ equals $\pi^2/(2a)^2$,
the first eigenvalue of the Dirichlet Laplacian in the interval $(-a,a)$.
On the other hand, putting $\kappa_1=1/a$ and $\kappa_2=-1/a$ in~\eqref{Hardy},
Proposition~\eqref{Prop2} yields an optimal Hardy-type inequality
\begin{equation}
  \int_{-a}^a |\phi'(u)|^2 \, \der u
  \geq \int_{-a}^a \frac{a^2}{(a^2-u^2)^2} \, |\phi(u)|^2 \, \der u
\end{equation}
for all $\phi \in W_0^{1,2}((-a,a))$.
We remark that this inequality is better 
than the well-known bound (see, \eg, \cite{BM})
$$
  \int_{-a}^a |\phi'(u)|^2 \, \der u
  \geq \int_{-a}^a \frac{1}{4 \,(a-|u|)^2} \, |\phi(u)|^2 \, \der u
$$
for all $\phi \in W_0^{1,2}((-a,a))$,
which can be established by the classical Hardy inequality.
Notice that the function $a-|\cdot|$ has the meaning of the distance
to the boundary of the one-dimensional domain~$(-a,a)$.
Hardy inequalities with weights of type~\eqref{potential.pre} 
have been recently considered for higher-dimensional domains in~\cite{Cazacu}
(see also \cite[Lem.~8]{BDE}). 
\end{Remark}
%

%-------------------------%
\section*{Acknowledgement}
%-------------------------%
%
The research of the second author was supported by 
the project RVO61389005 and the GACR grant No.\ 14-06818S.

%--------------%
% BIBLIOGRAPHY
%--------------%
%
%\bibliography{bib}
%\bibliographystyle{amsplain}
%\input{bound13.bbl}
%

\providecommand{\bysame}{\leavevmode\hbox to3em{\hrulefill}\thinspace}
\providecommand{\MR}{\relax\ifhmode\unskip\space\fi MR }
% \MRhref is called by the amsart/book/proc definition of \MR.
\providecommand{\MRhref}[2]{%
  \href{http://www.ams.org/mathscinet-getitem?mr=#1}{#2}
}
\providecommand{\href}[2]{#2}

\end{document}